\newtheorem{theorem}{Theorem}[section]
\newtheorem{lemma}[theorem]{Lemma}
\theoremstyle{definition}
\theoremstyle{remark}
\numberwithin{equation}{section}
\newcommand\dis{\stackrel{\mathclap{\normalfont\tiny{\mbox{d}}}}{=}}
\begin{document}

 \title[Rate of Convergence of Major Cost]{Rate of Convergence of Major Cost\\ Incurred in the In-Situ Permutation Algorithm}

%    Only \author and \address are required; other information is
%    optional.  Remove any unused author tags.

%    author one information
% \author[short version for running head]{name for top of paper}
\author{Sumit Kumar Jha}
\address{Center for Security, Theory, and Algorithmic Research\\ 
International Institute of Information Technology, 
Hyderabad, India}
\curraddr{}
\email{kumarjha.sumit@research.iiit.ac.in}
\thanks{}

%\subjclass[2000]{Primary }
%    The 2010 edition of the Mathematics Subject Classification is
%    now available.  If you are citing a classification from the
%    new scheme, use the following input coding instead.
%\subjclass[2010]{Primary }

\date{}

\begin{abstract}
The in-situ permutation algorithm due to MacLeod replaces $(x_{1},\cdots,x_{n})$ by\\ $(x_{p(1)},\cdots,x_{p(n)})$ where $\pi=(p(1),\cdots,p(n))$ is a permutation of $\{1,2,\cdots,n\}$ using at most $O(1)$ space. Kirshenhofer, Prodinger and Tichy have shown that the major cost incurred in the algorithm satisfies a recurrence similar to sequence of the number of key comparisons needed by the Quicksort algorithm to sort an array of $n$ randomly permuted items. Further, Hwang has proved that the normalized cost converges in distribution. Here, following Neininger and R{\"u}schendorf, we prove the that rate of convergence to be of the order $\Theta(\ln(n)/n)$ in the Zolotarev metric.
\end{abstract}

\maketitle

\section{Introduction}
The in-situ permutation algorithm developed by MacLeod \cite{MacLeod} replaces $(x_{1},\cdots,x_{n})$ by\\ $(x_{p(1)},\cdots,x_{p(n)})$ where $\pi=(p(1),\cdots,p(n))$ is a permutation of $\{1,2,\cdots,n\}$ using at most $O(1)$ space. Kirshenhofer, Prodinger and Tichy \cite{prodinger} have shown that assuming the input comes from a sequence of independently and identically distributed random variables with a common continuous distribution, the major cost measures, say $X_{n}$, incurred in the algorithm, can be described by $X_{0}=0$, and for $n\geq 1$,
\begin{equation}
\label{eq1}
X_{n}\dis X_{I_{n}}+X^{*}_{n-1-I_{n}}+I_{n},
\end{equation} 
where $(X_{n}),(X^{*}_{n})$, $(I_{n})$ are independent, $X_{n}\dis X_{n}^{*}$, and $I_{n}$ is uniformly distributed over $\{0,1,\cdots,n-1\}$. Here the symbol $\dis$ denotes equivalence in distribution.\par 
The mean and variance of $X_{n}$ were calculated by Knuth \cite{knuth} which satisfy 
$$\textbf{E}(X_{n})=n\ln{n}+(\gamma-2)n+O(\ln{n}),\quad \text{Var}(X_{n})=\sigma^{2}n^{2}-n\ln(n)+O(n)$$
where $\gamma$ denotes Euler's constant and $\sigma:=\sqrt{2-\pi^{2}/6}>0$.
\par
Further, Hwang \cite{hwang} showed using R{\"o}sler's contraction method that
$$
Y_{n}:=\frac{X_{n}-\textbf{E}(X_{n})}{n}\xrightarrow{d}Y
$$
where $\xrightarrow{d}$ denotes convergence in distribution. Here $Y$ satisfies 
\begin{equation}
\label{eq2}
Y\dis UY+(1-U)Y^{*}+C(U)
\end{equation}
where $Y\dis Y^{*}$, $U$ is the uniform random variable over the unit interval, $Y,Y^*$, and $U$ are independent, and $C(u):=(1-u)\ln(1-u)+u\ln(u)+u$.\par 
We wish to estimate the rate of convergence $Y_{n}\rightarrow Y$ following Neininger and R{\"u}schendorf \cite{naini}. The basic distance considered in \cite{naini} is the Zolotarev metric $\zeta_{3}$ which given distributions $\mathcal{L}(V),\mathcal{L}(W)$ is defined by
$$\zeta_{3}(\mathcal{L}(V),\mathcal{L}(W)):=\sup_{f\in \mathcal{F}_{3}}|\mathbf{E}f(V)-\mathbf{E}f(W)|,$$
where $\mathcal{F}_{3}:=\{f\in \mathbb{C}^{2}(\mathbb{R},\mathbb{R}):|f''(x)-f''(y)|\leq |x-y|\}$ is the space of all twice differentiable functions with second derivative being Lipschitz continuous with Lipschitz constant $1$. Hereon we use the notation $\zeta_{3}(V,W):=\zeta_{3}(\mathcal{L}(V),\mathcal{L}(W))$. It is known that convergence in $\zeta_{3}$ implies weak convergence and that $\zeta_{3}(V,W)<\infty$ if $\textbf{E}V=\textbf{E}W$, $\textbf{E}V^{2}=\textbf{E}W^{2}$, and $||V||_{3},||W||_{3}<\infty$. The metric $\zeta_{3}$ is ideal of order $3$, that is, we have for $T$ independent of $(V,W)$ and $c\neq 0$
$$\zeta_{3}(V+T,W+T)\leq \zeta_{3}(V,W),\quad \zeta_{3}(cV,cW)=|c|^{3}\zeta_{3}(V,W).$$
We wish to obtain following
\begin{theorem}
\label{thm}
The major cost ($X_{n}$) incurred in the in-situ permutation algorithm satisfying recurrence \eqref{eq1} satisfies
$$\normalfont \zeta_{3}\left(\frac{X_{n}-\textbf{E}(X_{n})}{\sqrt{\text{Var}(X_{n})}},X\right)=\Theta\left(\frac{\log(n)}{n}\right), \quad (n\rightarrow \infty)$$
where $X:=Y/\sigma$ is a scaled version of the limiting distribution in \eqref{eq2}.
\end{theorem}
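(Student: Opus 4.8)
The plan is to prove upper and lower bounds separately, both matching the order $\Theta(\log n / n)$. For the upper bound, I would follow the Neininger--Rüschendorf framework directly. The normalized quantity $X_n = (X_n - \mathbf{E}X_n)/\sqrt{\mathrm{Var}(X_n)}$ inherits from \eqref{eq1} a modified distributional recurrence of the contraction-method type. First I would rewrite \eqref{eq1} in terms of the normalized variables, obtaining
\begin{equation*}
X_n \overset{d}{=} A_n(I_n)\, X_{I_n} + B_n(I_n)\, X^*_{n-1-I_n} + b_n(I_n),
\end{equation*}
where the coefficients $A_n, B_n$ are the ratios of standard deviations $\sqrt{\mathrm{Var}(X_{I_n})/\mathrm{Var}(X_n)}$ etc., and $b_n$ collects the recentering terms built from $\mathbf{E}X_k$ and the toll $I_n$. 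Using the explicit asymptotics $\mathbf{E}(X_n) = n\ln n + (\gamma-2)n + O(\ln n)$ and $\mathrm{Var}(X_n) = \sigma^2 n^2 - n\ln n + O(n)$, I would verify that $A_n(I_n) \to U$, $B_n(I_n) \to 1-U$ and $b_n(I_n) \to C(U)/\sigma$ in the appropriate $L_3$ sense, so that the limit equation is exactly $X \overset{d}{=} UX + (1-U)X^* + C(U)/\sigma$, i.e.\ the scaled version of \eqref{eq2}. This identifies $X = Y/\sigma$ as the attractor.

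The core of the upper bound is then the standard contraction estimate. Writing $\zeta_n := \zeta_3\big((X_n - \mathbf{E}X_n)/\sqrt{\mathrm{Var}(X_n)},\, X\big)$, I would apply the ideality properties of $\zeta_3$ (invariance under independent shifts, homogeneity of degree $3$ under scaling) together with the limit fixed-point equation to derive a recursive inequality of the form
\begin{equation*}
\zeta_n \le \sum_{k=0}^{n-1} \frac{1}{n}\Big( \mathbf{E}|A_n(k)|^3 + \mathbf{E}|B_n(k)|^3 \Big)\,\zeta_{\max(k,\,n-1-k)} + R_n,
\end{equation*}
where $R_n$ is a remainder measuring the $\zeta_3$-distance between the finite-$n$ coefficients $(A_n,B_n,b_n)$ and their limits $(U,1-U,C(U)/\sigma)$. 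A careful Taylor-type expansion of the variance and mean asymptotics should give $R_n = O(\log n / n)$. The contraction factor $\frac{1}{n}\sum_k (\mathbf{E}|A_n(k)|^3 + \mathbf{E}|B_n(k)|^3)$ approaches $\mathbf{E}U^3 + \mathbf{E}(1-U)^3 = 1/2 < 1$, which is the crucial strict contraction; feeding this and the bound on $R_n$ into the general transfer lemma of \cite{naini} yields $\zeta_n = O(\log n / n)$ by induction.

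For the lower bound I would argue that the rate cannot be improved, matching the $O(\log n/n)$ upper bound with a $\Omega(\log n/n)$ lower bound. The natural source of the $\log n/n$ term is the discrepancy between the exact third moment (or third cumulant) of the normalized $X_n$ and that of the limit $X$: the $-n\ln n$ correction in $\mathrm{Var}(X_n)$ forces the scaling coefficients to differ from $(U,1-U)$ by a term of order $\log n/n$, which does not cancel. Concretely, I would examine the third cumulant $\kappa_3(X_n)$, compute its asymptotic expansion from the recurrence \eqref{eq1}, and show that after normalization the leading correction to $\kappa_3(X) = \mathbf{E}X^3$ is of exact order $\log n/n$; choosing a fixed smooth test function $f \in \mathcal{F}_3$ that is sensitive to the third moment (e.g.\ a cubic truncated suitably) then certifies $\zeta_n \ge c\,\log n/n$.

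The main obstacle I anticipate is the lower bound and, within the upper bound, the precise control of the remainder $R_n$. Because the two corrections — the $(\gamma-2)n$ and $O(\log n)$ terms in the mean, and the $-n\ln n$ term in the variance — enter $A_n(k)$, $B_n(k)$, and $b_n(k)$ in a coupled and nonlinear way (through square roots of ratios), showing that their combined $\zeta_3$-contribution is genuinely $\Theta(\log n/n)$ rather than, say, $O(1/n)$ or $O((\log n/n)^2)$ requires tracking the $\log n$ factor through the expansion without premature cancellation. I expect the cleanest route is to isolate the dominant term coming from the variance's $-n\ln n$ part, verify it survives in the third-order Zolotarev estimate, and confirm it is not annihilated by the centering corrections, thereby pinning both the upper and lower constants to the same order.
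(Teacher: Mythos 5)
Your proposal follows essentially the same route as the paper: the upper bound is the Neininger--R\"uschendorf contraction argument applied to the normalized recurrence (the paper writes recurrence \eqref{eq6} and defers to \cite{naini} for exactly the recursive $\zeta_3$-estimate you sketch), and the lower bound is the third-moment inequality $\zeta_{3}(V,W)\geq \frac{1}{6}|\mathbf{E}V^{3}-\mathbf{E}W^{3}|$ of Lemma \ref{lemma1} (your cubic test function works as is, since $f(x)=x^{3}/6$ has $f''(x)=x$ already $1$-Lipschitz), with the $\ln(n)/n$ term arising from the $-n\ln(n)$ correction in $\text{Var}(X_{n})$ exactly as you predict. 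The only detail worth making explicit is that the lower bound requires $M_{3}=\mathbf{E}Y^{3}\neq 0$, which the paper records as $M_{3}>0$.
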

We modify the proof in \cite{naini} suited for the above case in the next section. \par 
\textsc{Notation:} Subsequently, we use that $\text{Var}(Y)=\sigma$, $||Y||_{3}<\infty$ where $||Y||_{p}:=(\textbf{E}|Y|^{p})^{1/p}, $ $1\leq p<\infty$ denotes the $L^{p}$-norm.
\section{The Proof}
We start with the following lemma from \cite{naini}.
\begin{lemma}
\label{lemma1}
Let $V,W$ have identical first and second moment with $||V||_{3},||W||_{3}<\infty$, then
\begin{equation}
\label{eq3}
\normalfont
\frac{1}{6}|\textbf{E}V^{3}-\textbf{E}W^{3}|\leq \zeta_{3}(V,W)\leq \frac{1}{6}(||V||_{3}^{2}+||V||_{3}\, ||W||_{3}+||W||_{3}^{2})l_{3}(V,W)
\end{equation}
where 
\begin{equation}
\label{eq4}
l_{p}(\mathcal{L}(V),\mathcal{L}(W)):=l_{p}(V,W):=\inf\{||V-W||_{p}:V\dis V,W\dis W\}, \quad p\geq 1.
\end{equation}
\end{lemma}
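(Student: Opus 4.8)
The plan is to establish the two inequalities separately, each via a standard but distinct mechanism. For the lower bound I would exhibit a single explicit test function in $\mathcal{F}_3$ that isolates the third moment. The natural choice is $f(x) = x^3/6$, whose second derivative $f''(x) = x$ is Lipschitz with constant exactly $1$, so $f \in \mathcal{F}_3$. Since the hypotheses $\|V\|_3, \|W\|_3 < \infty$ guarantee that $\mathbf{E}V^3$ and $\mathbf{E}W^3$ are finite, feeding this $f$ into the defining supremum gives $\frac16|\mathbf{E}V^3 - \mathbf{E}W^3| = |\mathbf{E}f(V) - \mathbf{E}f(W)| \le \zeta_3(V,W)$ at once.

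For the upper bound I would fix an arbitrary $f \in \mathcal{F}_3$ and bound $|\mathbf{E}f(V) - \mathbf{E}f(W)|$ uniformly in $f$. The first step is to exploit the matching of the first two moments: setting $\tilde f(x) := f(x) - f(0) - f'(0)x - \tfrac12 f''(0)x^2$, the difference $\mathbf{E}f(V) - \mathbf{E}f(W)$ is unchanged, since the subtracted polynomial has degree $\le 2$ and $V,W$ share their first and second moments. The normalized $\tilde f$ satisfies $\tilde f(0) = \tilde f'(0) = \tilde f''(0) = 0$ while retaining $|\tilde f''(x)| = |\tilde f''(x) - \tilde f''(0)| \le |x|$; integrating twice yields the pointwise bounds $|\tilde f'(x)| \le x^2/2$ and $|\tilde f(x)| \le |x|^3/6$.

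Next I would pass to a coupling. On a common probability space realize the prescribed marginals of $V$ and $W$ and use the telescoping identity $\tilde f(V) - \tilde f(W) = (V - W)\int_0^1 \tilde f'\bigl(W + t(V-W)\bigr)\, dt$. Bounding $|\tilde f'|$ along the segment by $\tfrac12\bigl((1-t)|W| + t|V|\bigr)^2$ (using convexity of $z \mapsto z^2$) and integrating in $t$ produces the clean estimate $|\tilde f(V) - \tilde f(W)| \le \tfrac16 |V-W|\bigl(|V|^2 + |V||W| + |W|^2\bigr)$. Taking expectations and applying Hölder's inequality separately to the three resulting terms — with exponent pair $(3, 3/2)$ for the two outer products and the triple $(3,3,3)$ for the cross term — gives $|\mathbf{E}f(V) - \mathbf{E}f(W)| \le \tfrac16\bigl(\|V\|_3^2 + \|V\|_3\|W\|_3 + \|W\|_3^2\bigr)\|V-W\|_3$. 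Because the marginal norms are independent of the coupling, taking the infimum over couplings replaces $\|V-W\|_3$ by $l_3(V,W)$, and taking the supremum over $f \in \mathcal{F}_3$ delivers the claimed inequality.

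I expect the crux to be the coupling-plus-convexity step that converts the functional-analytic supremum into an explicit moment factor, rather than a genuine obstacle. The two points requiring care are verifying that the segment bound integrates exactly to $\tfrac13\bigl(|V|^2 + |V||W| + |W|^2\bigr)$, so that the constant $\tfrac16$ comes out correctly, and choosing the Hölder exponents so as to reproduce the symmetric factor $\|V\|_3^2 + \|V\|_3\|W\|_3 + \|W\|_3^2$; the moment-matching normalization of $f$ is the conceptual ingredient that makes the whole argument collapse onto $l_3$.
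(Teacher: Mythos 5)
Your proof is correct: $f(x)=x^{3}/6$ indeed lies in $\mathcal{F}_{3}$ and yields the lower bound, while the moment-matching normalization of $f$, the coupling identity, the exact computation $\int_{0}^{1}\bigl((1-t)|W|+t|V|\bigr)^{2}\,dt=\tfrac{1}{3}\bigl(|V|^{2}+|V|\,|W|+|W|^{2}\bigr)$, and the H{\"o}lder exponents $(3,3/2)$ and $(3,3,3)$ deliver the upper bound with the stated constant. The paper itself gives no proof of this lemma --- it is quoted from Neininger and R{\"u}schendorf --- and your argument is precisely the standard one underlying that reference, so there is nothing to compare beyond noting that you have correctly supplied the omitted details.
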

\begin{proof}[Proof of Theorem \ref{thm}]
The constants $\sigma(n)\geq 0$ are defined by
\begin{equation}
\label{eq5}
\sigma^{2}(n):=\text{Var}(Y_{n})=\sigma^{2}-\frac{\ln(n)}{n}+O\left(\frac{1}{n}\right).
\end{equation}
\textsc{\textbf{Lower Bound:}} Establishing the lower bounds only requires information of moments of $(X_{n})$. Using the bound in lemma \ref{lemma1}, we have

$$\zeta_{3}\left(\frac{X_{n}-\textbf{E}(X_{n})}{\sqrt{\text{Var}(X_{n})}},X\right)\geq \frac{1}{6}
\left|\textbf{E}\left(\frac{Y_{n}}{\sigma(n)}\right)^{3}-\textbf{E}\left(\frac{Y}{\sigma}\right)^{3}\right|$$

Observe that the third moment of $Y_{n}$ is
$$\textbf{E}Y_{n}^{3}=\frac{1}{n^{3}}\textbf{E}(X_{n}-\textbf{E}(X_{n}))^{3}=\frac{1}{n^{3}}\kappa_{3}(X_{n})=M_{3}+O\left(\frac{1}{n}\right)$$
with $M_{3}=\textbf{E}(Y^{3})>0$ where we use the expansion of third cumulant $\kappa_{3}(X_{n})$ of $X_{n}$ which can be explicitly computed using generating functions for factorial moments with aid of Maple in \cite{prodinger}. The equation \eqref{eq5} gives us
$$\frac{1}{\sigma^{3}(n)}=\frac{1}{\sigma^{3}}+\frac{3}{2\sigma^{5}}\frac{\ln(n)}{n}+O\left(\frac{1}{n}\right),$$
and thus
$$\frac{1}{6}
\left|\textbf{E}\left(\frac{Y_{n}}{\sigma(n)}\right)^{3}-\textbf{E}\left(\frac{Y}{\sigma}\right)^{3}\right|=\frac{M_{3}}{4\sigma^{5}}\frac{\ln(n)}{n}+O\left(\frac{1}{n}\right)$$
which proves the claimed lower bound in the theorem.\par 
\textbf{\textsc{Upper Bound}:} The variates $Y_{n}$ would satisfy the recurrence:
\begin{equation}
\label{eq6}
Y_{n}\dis \frac{I_{n}}{n}Y_{I_{n}}+\frac{n-1-I_{n}}{n}Y'_{n-1-I_{n}}+C_{n}(I_{n}), \quad n\geq 1,
\end{equation}
where $(Y_{n}), (Y_{n}'), I_{n}$ are independent, $Y_{k}\dis Y_{k}'$ for all $k\geq 0$ and $C_{n}(k):=\frac{1}{n}(\mu(k)+\mu(n-1-k)-\mu(n)+k)$, 
with 
$\mu(n):=\textbf{E}(X_{n}),$ $n\geq 0$. The rest of the proof follows identically as in the upper bound proof in \cite{naini}.
\end{proof}
\bibliographystyle{amsplain}
\bibliography{sample.bib}
\end{document}